\newcommand{\gilt}{:}
\newcommand{\sodass}{\,:\,}
\newcommand{\setGilt}[2]{\left\{ #1\sodass #2\right\}}
\newcommand{\realrange}[2]{\left[#1, #2\right]}
\newcommand{\unitrange}[2]{\realrange{0}{1}}
\newcommand{\llabel}[1]{\label{\labelprefix:#1}}
\newcommand{\labelprefix}{} 
\newcommand{\discussionsize}{\small}
\newcommand{\frage}[1]{}
\newenvironment{code}{\noindent
\begin{tabbing}%
\hspace{2em}\=\hspace{2em}\=\hspace{2em}\=\hspace{2em}\=\hspace{2em}\=%
\hspace{2em}\=\hspace{2em}\=\hspace{2em}\=\hspace{2em}\=\hspace{2em}\=%
\kill}{\end{tabbing}}
\newcommand{\labelcommand}{}
\newcommand{\captiontext}{}
\newsavebox{\codeparam}
\newcounter{lineNumber}
\newenvironment{disscodepos}[3]{%
\renewcommand{\labelcommand}{#2}%
\renewcommand{\captiontext}{#3}%
\sbox{\codeparam}{\parbox{\textwidth}{#3}}%
\begin{figure}[#1]\begin{center}\begin{code}\setcounter{lineNumber}{1}}{%
\end{code}\end{center}\caption{\llabel{\labelcommand}\captiontext}\end{figure}}
\newcommand{\Is}       {:=}
\newdimen\endofsize\endofsize=0.5em
\def\endofbeweis{~\quad\hglue\hsize minus\hsize
                 \hbox{\vrule height \endofsize width
\endofsize}\par}
\definecolor{note_fontcolor}{rgb}{0.800781, 0.800781, 0.800781}
\pgfplotsset{compat=1.12}
\newcommand{\csch}[1]{{\color{blue}[CS: #1]}}
\newcommand{\mpopp}[1]{{\color{green}[MP: #1]}}
\renewcommand{\csch}[1]{}
\renewcommand{\mpopp}[1]{}
\def\MdR{\ensuremath{\mathbb{R}}}
\newcommand{\ie}{i.e.\ }
\newcommand{\eg}{e.g.\ }
\definecolor{infocolor}{rgb} {0.6,0.6,0.6}
\title{Graph Partitioning with Acyclicity Constraints}
\author[1]{Orlando Moreira}
\author[1]{Merten Popp}
\author[2]{Christian Schulz}
\affil[1]{Intel Corporation, Eindhoven, The Netherlands, \\\url{orlando.moreira@intel.com}, \url{merten.popp@intel.com}}
\affil[2]{Karlsruhe Institute of Technology, Karlsruhe, Germany, and\\ University of Vienna, Vienna, Austria\\
  \texttt{christian.schulz@\{kit.edu, univie.ac.at\}}}
\keywords{Graph Partitioning, Computer Vision and Imaging Applications}
\begin{document}

\maketitle
\begin{abstract}
    Graphs are widely used to model execution dependencies in applications.
    In particular, the NP-complete problem of partitioning a graph under
    constraints receives enormous attention by researchers because of its
    applicability in multiprocessor scheduling. We identified the additional
    constraint of acyclic dependencies between blocks when mapping computer
    vision and imaging applications to a heterogeneous embedded multiprocessor.
    Existing algorithms and heuristics do not address this requirement and
    deliver results that are not applicable for our use-case. In this
    work, we show that this more constrained version of the graph
    partitioning problem is NP-complete and present heuristics that achieve
    a close approximation of the optimal solution found by an exhaustive search
    for small problem instances and much better scalability for larger
    instances.
    In addition, we can show a positive impact on the schedule of a real imaging
    application that improves communication volume and execution time.
\end{abstract}
\section{Practical Motivation}
\label{intro}

The context of this research is the development of computer vision and imaging
applications at Intel Corporation. These applications have high demands for
computational power but often need to run on embedded devices with severely
limited compute resources and a tight thermal budget. Our target platform is a
heterogeneous multiprocessor for advanced imaging and computer vision and is
currently used in Intel processors. It is designed for low power and has small
local program and data memories. To cope with the memory constraints, the
application developer currently has to \emph{manually} break the application, which is given as
a directed graph, into smaller blocks that are executed one after another. The
quality of this partitioning has a strong impact on communication volume and
performance. However, for large graphs this is a non-trivial task that requires
detailed knowledge of the hardware. Hence, the task should be handled by a
well-designed algorithm instead.

There are many existing heuristics for partitioning graphs into blocks of nodes
of roughly equal size. However, our platform has the requirement that there
must not be a cycle in the dependencies between the blocks because they have to
be executed one after another.

The contributions of this work are the identification of a new variation of the
graph partitioning problem, proofs it is NP-complete and hard
to approximate, as well as the implementation and evaluation of heuristics that address
this problem.
First, we present all necessary background information on the application graph
and hardware and explain our additional constraint in Section~\ref{background}.
We then continue to briefly introduce all basic concepts and related work in
Section~\ref{preliminaries}. We have not been able to identify works that
address the aforementioned constraint of our problem variant, which originates
from the hardware platform and disallows cycles between blocks.
The proofs are found in Section~\ref{proof} and the proposed heuristic
algorithms in Section~\ref{heuristics}.
We perform a number of experiments in Section~\ref{evaluation}, where we
use small graphs to compare our heuristics against an optimal algorithm that
uses exhaustive search.
We then evaluate our heuristics for larger graphs in the context of a
real-world imaging application and estimate the impact on the application.
In addition, we demonstrate the scalability of our heuristics with a set of
large graphs.
Finally, we conclude in Section~\ref{conclusion}.

\section{Background}
\label{background}
Computer vision and imaging applications can often be expressed as graphs where
nodes represent imaging functions and edges denote data dependencies. The
widely-accepted industry standard OpenVX \cite{openvx} released in 2014 by the
Khronos group uses a graph-based execution model. With the OpenVX API the
developer can specify the data flow of the application as a graph independent of
hardware constraints. The hardware vendor on the other hand can provide an API
implementation that uses advanced optimizations \cite{rainey2014addressing} like
specialized hardware, parallelized and pipelined node execution, overlapped
computation and data transfers and aggregated data transfers to avoid a round
trip to external memory.

\begin{figure}
\includegraphics[width=\columnwidth]{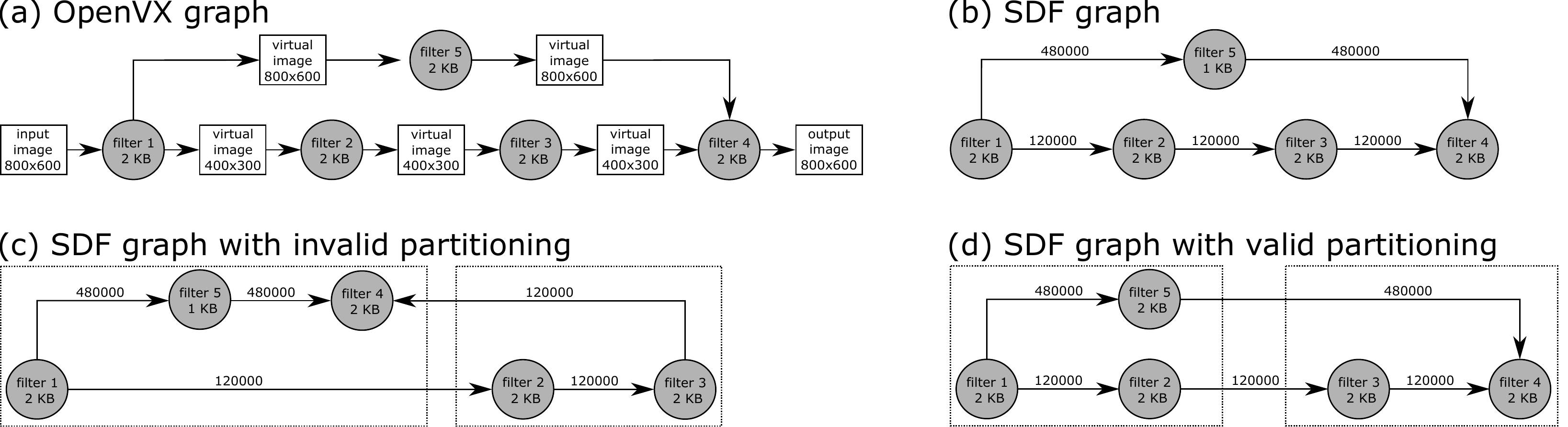}
\caption{Depiction of an imaging application graph. The nodes represent
processing kernels and are annotated with the size of the program binaries. The
OpenVX graph as specified by the developer using the API standard is shown in
(a), virtual images are intermediate data. The initial SDF representation is
shown in (b) where the edges are annotated with the buffer size, (c) shows an
invalid partition with minimal edge cut, but a bidirectional connection between
blocks and thus a cycle in the quotient graph. A valid partitioning with minimal
edge cut is shown in (d).\label{fig:OpenVX_example}}
\end{figure}

The application is specified as a Directed Acyclic Graph (DAG) in OpenVX\@. The
nodes of the DAG are either kernels (small, self-contained functions) or data
objects. Edges denote data dependencies and always connect exactly one kernel
with one data object. No cycles or feedback loops are allowed \cite{openvx}.
The need of some imaging algorithms to access previous data (\eg{}video
stabilization) is addressed by special OpenVX delay objects that hold data of
previous graph executions. Our existing tool flow converts the OpenVX graph in
linear time into an Synchronous Dataflow (SDF) graph.
SDF is a model of computation that abstracts from functionality and enables
several prevalent analysis and scheduling techniques~\cite{lee1987synchronous}.
In this representation, nodes represent processing
and the directed edges represent FIFO buffers.
The SDF nodes are annotated with the program size for each kernel in the OpenVX
graph.
If two kernels are linked by a data object in the OpenVX graph, the SDF nodes
are connected by a directed edge annotated with the size of the data object.
The resulting graph is a DAG\@.
An example of this conversion is shown~in~Figure~\ref{fig:OpenVX_example}a and
\ref{fig:OpenVX_example}b.

In this work, we address a graph partitioning problem that arises
when mapping the nodes of a DAG to the processing elements of a heterogeneous
embedded multiprocessor. The processing elements (PEs) of this platform have a
private local data memory and a separate program memory. A direct
memory access controller is used to transfer data between the local
memories and the external DDR memory~of~the~system.
The data memories have a size in the order of hundreds of kilobytes and can thus
only store a small portion of the image. Therefore the input image is divided
into \emph{tiles}. The mode of operation of this hardware usually is that the
nodes in the application graph are assigned to PEs and process the tiles one
after the other. In most cases this can be pipelined such that while the PEs
process the current tile, the direct memory access controller concurrently
loads the next tile to the local memories and writes the processed tile from the
previous iteration back to main memory.

However, this is only possible if the program memory size of the PEs is
sufficient to store all kernel implementations. For the hardware platform under
consideration it was found that this is not the case for more complex
applications such as a Local Laplacian filter~\cite{paris2011local}. Therefore a gang
scheduling \cite{feitelson1992gang} approach is used where the kernels
are divided into groups of kernels (referred to as gangs) that do not
violate memory constraints. Gangs are executed one after another on the target
platform. After each execution, the kernels of the next gang are loaded. At no
time any two kernels of different gangs are loaded in the program memories of
the processors at the same time.
Thus all intermediate data that is produced by the current gang but is needed by
a kernel in a later gang needs to be transferred~to~external~memory.

Since memory transfers, especially to external memories, are expensive in terms
of power, the assignment of nodes to gangs is crucially important. There are
many graph partitioning algorithms for the problem of dividing a graph into
blocks of nodes under certain conditions~\cite{SPPGPOverviewPaper}. However, in
this case we require a strict ordering of gangs.
Data objects may only be consumed in the same gang where they were produced and
in gangs that are scheduled later. If this does not hold, there is no valid
order in which the gangs can be executed on the platform. A valid order is a
topological ordering of the graph that represents data dependencies between
gangs. This graph can be created by taking the original DAG of the application
and contracting all nodes that are assigned to the same gang into a single node.
In order for a valid gang execution order to exist, the resulting graph
therefore must be a DAG itself.
An example for an incorrect assignment is shown
in~Figure~\ref{fig:OpenVX_example}c and a correct assignment
in~Figure~\ref{fig:OpenVX_example}d.

\section{Preliminaries}
\label{preliminaries}
In this section, we introduce the mathematical notation used throughout this
paper, give the formal definition of the graph partitioning problem and
show its relation to multiprocessor scheduling as a whole.

\subsection{Basic Concepts}
Let $G=(V=\{0,\ldots, n-1\},E,c,\omega)$ be an directed graph
with edge weights $\omega: E \to \MdR_{>0}$, node weights
$c: V \to \MdR_{\geq 0}$, $n = |V|$, and $m = |E|$.
We extend $c$ and $\omega$ to sets, i.e.,
$c(V')\Is \sum_{v\in V'}c(v)$ and $\omega(E')\Is \sum_{e\in E'}\omega(e)$.
We are looking for \emph{blocks} of nodes $V_1$,\ldots,$V_k$
that partition $V$, i.e., $V_1\cup\cdots\cup V_k=V$ and $V_i\cap V_j=\emptyset$
for $i\neq j$.
We call a block $V_i$ \emph{underloaded} [\emph{overloaded}] if
$c(V_i) < L_{\max}$ [if $c(V_i) > L_{\max}$]. If a node $v$ has a neighbor in
a block different of its own block then both nodes are called \emph{boundary
nodes}.
An abstract view of the partitioned graph is the so-called \emph{quotient
graph}, in which nodes represent blocks and edges are induced by connectivity
between blocks.
The \emph{weighted} version of the quotient graph has node weights which are set
to the weight of the corresponding block and edge weights which are equal to the
weight of the edges that run between the respective blocks.

\subsection{Problem Definition}
The partitions that we are looking for have to satisfy two constraints: a
balancing constraint and a acyclicity constraint. The \emph{balancing
constraint} demands that
$\forall i\in \{1..k\}\gilt c(V_i) \leq L_{\max} := (1+\epsilon)\lceil\frac{c(V)}{k}\rceil$
for some imbalance parameter $\epsilon \geq 0$.
The \emph{acyclicity constraint} mandates that
the quotient graph is acyclic.
The objective is to minimize the total \emph{cut} $\sum_{i,j}w(E_{ij})$ where
$E_{ij}\Is\setGilt{(u,v)\in E}{u\in V_i,v\in V_j}$.
The \emph{directed graph partitioning problem with acyclic quotient graph (DGPAQ)}
is then defined as finding a partition $\varPi:=\left\{V_{1,}\ldots,V_{k}\right\}$
that satisfies both constraints while minimizing the objective function.

\subsection{Relation to Scheduling}
The balancing constraint ensures that the size of the programs in a
scheduling gang does not exceed the program memory size of the platform
and thus is an important constraint for scheduling.
Reducing the edge cut reduces the amount of data transfers between gangs
and thus improves the memory bandwidth requirements of the application.
Note that an application is either compute-limited (processors are always
occupied) or bandwidth-limited (processors wait for data). Thus a minimization
of transfers does not guarantee an optimal schedule.
However, especially in embedded systems, the memory bandwidth is often the
bottleneck and a schedule requiring a large amount of transfers will neither
yield a good throughput nor good energy efficiency~\cite{panda2001data}.
Therefore, we address the problem of minimizing the edge cut under the
given constraints in isolation and do not solve a scheduling problem in this work.
We provide linear-time heuristics that can later be employed as subroutines in
broader scheduling algorithms to reduce data transfers.

\subsection{Related Work}
\label{relatedwork}
There has been a vast amount of research on graph partitioning so that we
refer the reader to \cite{schloegel2000gph,GPOverviewBook,SPPGPOverviewPaper}
for most of the material.
Here, we focus on issues closely related to our main contributions.
All general-purpose methods that are able to obtain good partitions for large
real-world graphs are based on the multilevel principle. The basic idea can be
traced back to multigrid solvers for systems of linear equations
\cite{Sou35} but more recent practical methods are based on mostly graph
theoretical aspects, in particular edge contraction and local search.
There are many ways to create graph hierarchies such as matching-based schemes
\cite{Walshaw07,karypis1998fast,Scotch} or variations thereof \cite{Karypis06}
and techniques similar to algebraic multigrid, \eg{}\cite{meyerhenke2006accelerating}.
We refer the interested reader to the respective papers for more details.
Well-known software packages based on this approach include
Jostle~\cite{Walshaw07}, KaHIP~\cite{kaffpa}, Metis~\cite{karypis1998fast} and
Scotch~\cite{ptscotch}. However, none of these tools can partition directed
graphs under the constraint that the quotient graph is a DAG.\@ We are not aware
of any related work that is able to~satisfy~this~constraint.

Gang scheduling was originally introduced to efficiently schedule parallel
programs with fine-grained interactions~\cite{feitelson1992gang}.
In recent work, this concept has been applied to schedule parallel applications
on virtual machines in cloud computing~\cite{stavrinides2016scheduling}
and extended to include hard real-time tasks~\cite{goossens2016optimal}.
An important difference to our work is that in gang scheduling all tasks that
exchange data with each other are assigned to the same gang, thus there is no
communication between gangs. In our work, the limited program memory of embedded
platforms does not allow to assign all kernels to the same gang.
Therefore, there is communication between gangs which we aim to minimize by
employing graph partitioning methods.

\section{Hardness Results}
\label{proof}
\begin{wrapfigure}{r}{0.45\textwidth}
\vspace*{-.5cm}
\includegraphics[width=0.4\textwidth]{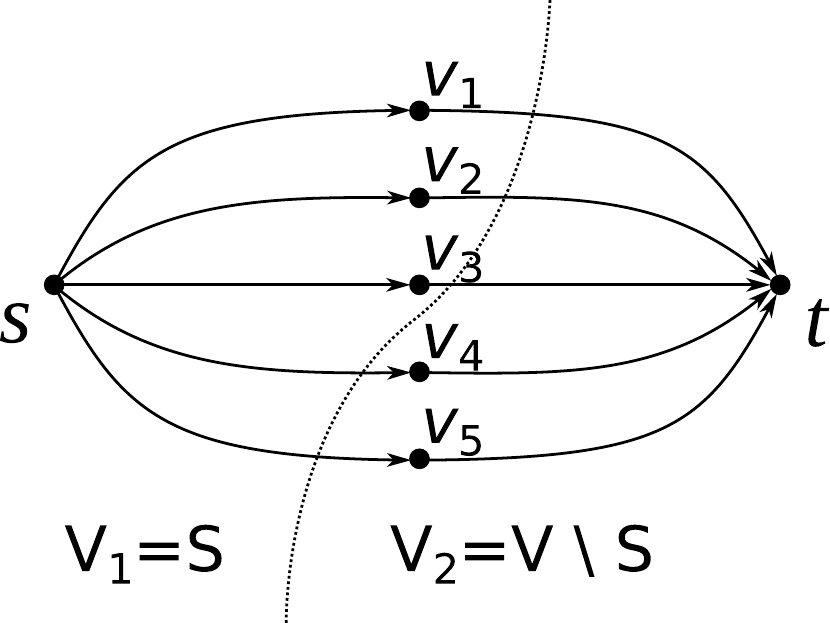}
\centering
\caption{Reduction: subset sum problem
is reduced to DGPAQ by creating a node for each $a_i$ (the nodes in the
center) and adding a source and sink node with edges as shown.\label{fig:subset_sum_reduction}}
\label{fig:npconstruction}
\vspace*{-.25cm}
\end{wrapfigure}

In this section, we show that the problem under consideration is
NP-complete when restricted to the case $k=2$ and $\epsilon=0$, and also hard to
approximate with a finite approximation factor for $k\geq3$.
A given solution for an instance of DGPAQ can be verified in linear time by
constructing the quotient graph $\mathcal{Q}$, checking the balance constraint
and checking $\mathcal{Q}$ for acyclicity. The last task can be done in linear
time in the size of $\mathcal{Q}$ using Kahn's
algorithm~\cite{kahn1962topological}. We now reduce the subset sum problem to
our problem. The proof is inspired by the reduction used
in~\cite{picard1980structure} which shows that the most balanced minimum cut
problem is NP-complete.

\begin{theorem}
The DGPAQ problem is NP-complete for the bi-partitioning case with $\epsilon=0$.
\end{theorem}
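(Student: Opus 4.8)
The plan is to establish both halves of \textbf{NP}-completeness, containment being the easy part. Membership in \textbf{NP} has essentially been argued already: a candidate bipartition $\{V_1,V_2\}$ is a certificate of polynomial size, and one verifies the balance constraint and runs Kahn's algorithm on the (two-node) quotient graph to check feasibility in linear time. So the work is the \textbf{NP}-hardness direction, which I would do by a polynomial-time reduction from \textsc{Subset Sum}, following the construction indicated in Figure~\ref{fig:subset_sum_reduction}.

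Given a \textsc{Subset Sum} instance consisting of positive integers $a_1,\dots,a_n$ with $S:=\sum_i a_i$ and a target $T$, I may assume $1\le 2T\le S$ (otherwise the instance is trivially decidable, or one replaces $T$ by $S-T$). Build the DGPAQ instance with one central node $v_i$ of weight $c(v_i)=a_i$ for each $i$, a source $s$ and a sink $t$ with $c(s)=c(t)=0$, one extra isolated node $d$ with $c(d)=S-2T\ge 0$, and edges $s\to v_i$ and $v_i\to t$ of weight $a_i$ for every $i$. Set $k=2$ and $\epsilon=0$, so that $c(V)=2(S-T)$ and $L_{\max}=\ceil{c(V)/2}=S-T$, and take the cut budget $B:=S$.

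The heart of the reduction is a structural observation about feasible solutions. If $s$ and $t$ lay in a common block, then any central node placed in the other block would induce an edge in both directions between the two blocks of the quotient graph --- a $2$-cycle, contradicting acyclicity --- so all of $v_1,\dots,v_n$ would have to join $s$ and $t$, giving that block weight at least $S>S-T=L_{\max}$, contradicting balance (this also rules out a partition with an empty block). Hence $s$ and $t$ sit in different blocks, say $s\in V_1$, $t\in V_2$; then every cut edge runs from $V_1$ to $V_2$, the quotient graph is acyclic automatically, and for each $i$ exactly one of the two edges at $v_i$ is cut, so the cut value equals $S=B$ for \emph{every} feasible partition. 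Since $\epsilon=0$ forces $c(V_1)=c(V_2)=S-T$, a feasible partition exists if and only if the central nodes sharing a block with $d$ have total weight $(S-T)-(S-2T)=T$ --- i.e.\ exactly when the original \textsc{Subset Sum} instance is a yes-instance; conversely, given a subset summing to $T$, placing it together with $d$ and $s$ on one side and the remaining central nodes with $t$ on the other yields a balanced, acyclic partition of cut value $S$. The reduction is clearly computable in linear time, which completes the argument.

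I expect the one delicate point to be exactly this structural lemma and its boundary cases: one must verify carefully that acyclicity genuinely forbids $s$ and $t$ from sharing a block (including the empty-block situation, and the degenerate cases $2T=S$ where $d$ has weight $0$), and that once $s$ and $t$ are separated the rigid $\epsilon=0$ balance constraint leaves precisely the degrees of freedom of a \textsc{Subset Sum} choice. The cut objective is essentially inert here, being constant over all feasible partitions, so the content of the reduction flows through the acyclicity constraint and through the padding node $d$ --- the latter being what lets an arbitrary target $T$ be encoded even though $L_{\max}$ is pinned to the perfectly balanced value.
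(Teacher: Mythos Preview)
Your argument is correct. The core construction---a source, a sink, and one central node per input integer---matches the paper's (and Figure~\ref{fig:subset_sum_reduction}), but the two proofs dress it differently. The paper reduces from the \textsc{Partition} variant (target $T=S/2$), gives $s$ and $t$ large weight $A=\sum_i 2a_i$, and separates them purely by the balance constraint (two nodes of weight $A$ cannot share a block of capacity $3A/2$); no padding node is needed. You instead reduce from general \textsc{Subset Sum}, take $c(s)=c(t)=0$, and compensate with the padding node $d$ of weight $S-2T$; separation of $s$ and $t$ then genuinely requires your acyclicity-plus-balance lemma, since balance alone would not forbid two weight-$0$ nodes from cohabiting. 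What this buys you is an explicit role for the acyclicity constraint in the hardness argument and the ability to encode arbitrary targets; what the paper's version buys is economy---since \textsc{Partition} is already NP-complete, the padding node and the case analysis on where $d$ lands are unnecessary, and the proof fits in a paragraph.
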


\begin{proof}
We reduce the NP-complete~\cite{gary1979computers} subset sum problem to DGPAQ\@.
The decision version of the subset problem is stated as follows:
Given a set of integers $\{a_1, \ldots, a_n\}$, is there a non-empty subset
$I \subseteq \{1, \ldots, n\}$
such that $\sum_{i \in I} a_i = \sum_{i \not \in I} a_i$ holds?
The construction of an equivalent instance of DGPAQ is as follows:
We construct a DAG $G=(V,E,c)$ with nodes $s,t \in V$ as well as a
node $v_i \in V$ for each $i \in \{1,\ldots, n\}$.
Then we set $A := \sum_i 2a_i$ and define the node weights as $c(s), c(t) := A$,
$c(v_i) := 2a_i$.
Afterwards, we insert edges $(s,v_i)\ \forall i$ and $(v_i,t)\ \forall i $.
The graph is a DAG -- an example topological ordering puts $s$ first, $t$ last and the remaining nodes at
arbitrary positions in between.
Figure~\ref{fig:npconstruction} illustrates the construction.
By definition, $L_{\max}=3A/2$ for this instance of DGPAQ\@. Note that by
construction $A$ is divisible by 2.
The construction can be done in polynomial time.
Note that all balanced partitions $(S,V \backslash S)$ cut $n$ edges, and due to
the balance constraint $s$ and $t$ can never be in the same block.
This ensures that there cannot be any edge $(u,v)$ with $u \in V\backslash S$ and $v \in S$
and hence the quotient graph is acyclic.
If the subset sum instance is a yes instance, then there is perfectly balanced
bipartition~and~vice~versa.
\end{proof}
The following theorem shows that it is not possible to find a finite factor
approximation algorithm for our general problem where $k$ is not a constant.
The proof is a modification of the proof by Andreev and
Räcke~\cite{andreev2006balanced} which shows this for the classical graph
partitioning problem, \ie no acyclicity constraint and for undirected inputs.
Hence, we follow the proof~of~\cite{andreev2006balanced}~closely with the
difference being that the inputs that we construct are DAGs.
\begin{theorem}
The directed graph partitioning problem with acyclic quotient graph has no
polynomial time approximation algorithm with a finite approximation factor
for $\epsilon=0, k\geq 3$ unless~$P=NP$.
\end{theorem}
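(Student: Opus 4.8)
The plan is to follow the Andreev--Räcke argument, whose engine is the following observation about cut-minimization problems: a polynomial-time algorithm with a finite approximation factor $\rho$ --- even one that depends on the instance size --- must return an \emph{exactly optimal} solution on every instance whose optimum is $0$, because $\rho\cdot 0=0$. It therefore suffices to construct, in polynomial time, DGPAQ instances that are DAGs (so the construction stays inside our problem) and for which deciding whether a feasible partition of cut $0$ exists is NP-hard. I would obtain such instances by reduction from the strongly NP-complete \textsc{3-Partition} problem: given $3m$ positive integers $a_1,\dots,a_{3m}$ with $\sum_i a_i=mB$ and $B/4<a_i<B/2$ for every $i$, decide whether the index set can be split into $m$ triples each summing to exactly $B$; we may assume $m\ge 3$, which will give $k\ge 3$.

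For the reduction, replace each integer $a_i$ by a connected \emph{acyclic} directed gadget $H_i$ of total node weight $a_i$ --- say a directed out-star on $a_i$ unit-weight nodes, or, exploiting the node weights available in our model, simply a single node of weight $a_i$ --- and put no edges between distinct gadgets. Set $k=m$; with $\epsilon=0$ the capacity is $L_{\max}=\lceil c(V)/k\rceil=\lceil mB/m\rceil=B$. The disjoint union $G$ of the gadgets is a DAG, since concatenating a topological order of each $H_i$ topologically orders $G$. The key claim is that $G$ admits a feasible partition of cut $0$ if and only if the \textsc{3-Partition} instance is a yes-instance. In the forward direction, grouping the gadgets according to the triples produces $m$ blocks of weight exactly $B=L_{\max}$ that cut no edge, so the quotient graph is edgeless, hence acyclic, and the cut is $0$. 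Conversely, a cut-$0$ partition cuts no edge, so by connectivity no gadget is split; each block is thus a union of whole gadgets, and since every block has weight at most $B$ while the blocks sum to $mB$, every block has weight exactly $B$; the bounds $B/4<a_i<B/2$ then force each block to consist of exactly three gadgets, yielding a \textsc{3-Partition} solution.

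To finish, given a \textsc{3-Partition} instance I build $G$ --- which has size polynomial in the input precisely because \textsc{3-Partition} is \emph{strongly} NP-complete, so unary weights cause no blow-up --- run the hypothetical finite-factor approximation algorithm $\mathcal{A}$ on it, and check in linear time whether the returned partition is balanced, has acyclic quotient graph, and has cut $0$. By the opening observation and the claim, this check succeeds exactly when the instance is a yes-instance; hence \textsc{3-Partition} is decidable in polynomial time and $P=NP$. I do not anticipate a genuine obstacle: on the zero-cut optimum the acyclicity constraint is vacuous, so adapting Andreev--Räcke reduces to noting that the hard instances can be taken to be DAGs, which the disjoint-gadget construction achieves for free. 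The only points needing care are reading ``finite approximation factor'' so that $\rho$ may grow with the instance (the $\rho\cdot 0=0$ step still applies) and using a \emph{strongly} NP-complete source problem to keep the reduction polynomial.
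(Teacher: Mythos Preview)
Your proposal is correct and follows essentially the same route as the paper: reduce from strongly NP-complete \textsc{3-Partition}, build a DAG as a disjoint union of connected acyclic gadgets of the prescribed sizes, and exploit the $\rho\cdot 0=0$ observation. The paper's gadgets are transitive tournaments on $a_i$ unit-weight nodes (all edges $(u,v)$ with $u<v$); your directed out-star on $a_i$ unit-weight nodes is an equally valid, sparser choice, and the remainder of the argument is identical.

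One caveat worth noting: your parenthetical alternative of taking each $H_i$ to be a \emph{single} node of weight $a_i$ does not cleanly support an approximation-hardness argument. With no edges whatsoever, every partition has cut $0$, so the question collapses to whether a \emph{feasible} balanced partition exists at all; on NO-instances there is none, and the behaviour of an approximation algorithm on infeasible inputs is not part of its contract (it might fail to terminate, or its guarantee is simply vacuous). The unit-weight gadgets --- your out-stars or the paper's tournaments --- avoid this pitfall because a feasible balanced partition always exists (there are $mB$ unit-weight nodes and $m$ blocks of capacity $B$), and the optimum cut is $0$ precisely on YES-instances, which is exactly the dichotomy the $\rho\cdot 0=0$ trick requires.
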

\begin{proof}
The 3-Partition problem is defined as follows. Given $n=3k$ integers $a_1,
\ldots, a_n$ and a threshold $A$ such that $A/4 < a_i < A/2$ and $\sum_i
a_i = kA$, decide whether the numbers can be partitioned into triples such
that each triple adds up to $A$. This problem is \emph{strongly}
NP-complete~\cite{gary1979computers}, \ie the problem remains NP-complete
if all numbers $a_i$ and $A$ are polynomially~bounded.

Now suppose we have an approximation algorithm for the directed graph
partitioning problem with acyclic quotient graph for $\epsilon=0$. We can
use this algorithm to decide the 3-Partition problem with polynomially
bounded numbers. To do so, we construct a graph~$G$ that contains $n$
subgraphs.
Subgraph $i$ has $a_i$ nodes. All weights are set to $1$.
We make each of the subgraphs a directed clique, \ie all edges $(u,v)$ with $u
< v$ are inserted into the subgraph. By construction $G$ is a DAG\@.
 This is the main difference to
$\cite{andreev2006balanced}$ in which the subgraphs~are~undirected~cliques.
Also since all numbers are polynomially bounded, the construction takes
polynomial time.

Now, if the 3-Partition instance can be solved, the $k$-DGPAQ problem in $G$
can be solved without cutting any edge. Note that this solution also
fulfills the acyclicity constraint. If the 3-Partition instance cannot be
solved, then the optimum solution to the $k$-DGPAQ problem will cut at
least one edge. An approximation algorithm with finite approximation factor
has to differentiate between these two cases. Hence, it can solve the
3-Partition problem.
\end{proof}

\section{Heuristic Algorithms}
\label{heuristics}
In this section we present simple yet effective construction and local search
heuristics to tackle the problem. Our general approach is as follows: First
create an initial solution based on a topological ordering of the input graph and
then apply a local search strategy to improve the objective of the solution while
maintaining both constraints. We start the section with the construction
algorithm and then present different local~search~heuristics.

\subsection{Construction Algorithm}
All of our local search heuristics start with an initial partitioning that
fulfills both constraints, \ie the quotient graph is acyclic and the
balance constraint is satisfied.
Our algorithm does this by computing a random topological ordering of the nodes using a
modified version of Kahn's algorithm with randomized tie-breaking. More
precisely, the algorithm initializes a list S with all nodes that have indegree
zero and an empty list T. It then repeats the following steps until the list S
is empty: Select a node from S uniformly at random and remove it from the list.
Add the node to the tail of T. Remove all outgoing edges of the node. If this
reduces the indegree of another node to zero, add it to S. When the algorithm
terminates, the list T is a topological ordering of all nodes unless the graph
has a cycle.
Using list T, we can now derive initial solutions by dividing the graph into
blocks of consecutive nodes w.r.t to the ordering.
Due to the properties of the topological ordering there is no node in a block
$V_{j}$ that has an outgoing edge ending in a block $V_{i}$ with $i<j$. Hence,
the quotient graph of our solution is cycle-free. In addition, the blocks are
chosen such that the balance constraint is fulfilled.
There is obviously a large number of possible divisions. Our algorithm
generates a balanced initial partitioning by dividing the ordering
into blocks of size $\lfloor\frac{c(V)}{k}\rfloor$ or
$\lceil\frac{c(V)}{k}\rceil$ uniformly at random.
Since the construction algorithm is randomized, we run the heuristics $\ell$
times with different initial partitionings and pick the best~solution~afterwards.

\subsection{Local Search Algorithms}
Our local search heuristics take a given initial solution and move nodes
between the blocks in order to decrease the edge cut.
The reduction of the edge cut after a move is called the \emph{gain}
of the move. To compute the gain when moving node $v$, we define two functions:
\begin{align*}
C_{in}(v, i) &:= \omega(\{(u,v)\in E : u\in V_{i}\}) \\
C_{out}(v, i) &:= \omega(\{(v,u)\in E : u\in V_{i}\})
\end{align*}
\begin{wrapfigure}{r}{0.4\columnwidth}
\vspace*{-.5cm}
\includegraphics[width=0.35\columnwidth]{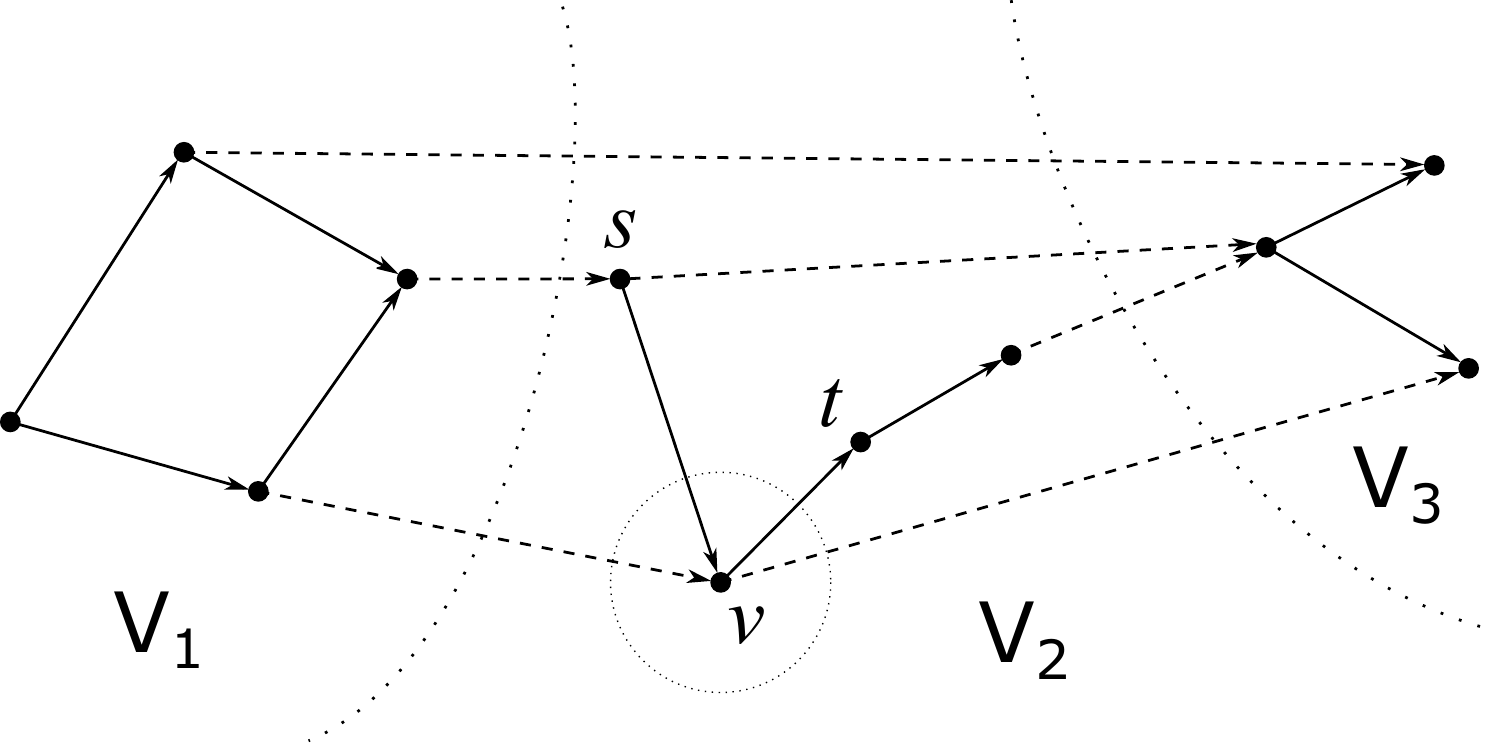}
\centering
\caption{A DAG divided into three blocks. Internal edges are solid,
external edges are dashed. Node $v$ is a node that has
non-zero internal and external cost for both $C_{in}$ and $C_{out}$. Because
of $(s, v)\in E \Rightarrow C_{in}(v, 2)>0$, the node cannot be moved to
$V_1$. Because of $(v, t)\in E \Rightarrow C_{out}(v, 2)>0$, the node cannot
be moved to $V_3$ either.\label{fig:partitioning_example}}
\vspace*{-.25cm}
\end{wrapfigure}
Roughly speaking, $C_{in}$ is the combined weight for all edges that start in
nodes of block $V_i$ and end in $v$.
Analogously, $C_{out}$ is the combined weight of all edges that start in $v$ and
connect to nodes in the block~$V_i$.
If~$v \in V_i$, these costs are the weights of \emph{internal} edges.
These edges will become external edges and increase the objective if we move $v$
to a different block.
If $v \in V_j, j \neq i$, then these costs are weights of \emph{external} edges,
which will become internal and thus reduce the edge cut if $v$ is moved to
$V_i$.
Figure~\ref{fig:partitioning_example} shows an example of internal and external
edges.

We have multiple local search heuristics that differ in the size of the local
search neighborhood: Simple Moves, Advanced Moves, Global Moves as well as FM moves.
We found that the heuristics can often yield better results with a different
initial partitioning. In order to compare the different heuristics, we will give
each heuristic the same time budget and will restart the heuristics for
different initial partitionings until it is exhausted.

\paragraph*{Simple Moves (SM)}
Simple moves start by picking a node $v$ and moving it to a different block if
this does not violate the constraints and improves the objective.
Our simple move heuristic only considers to move a node $v\in V_i$ to adjacent
blocks $V_{i-1}$ and $V_{i+1}$. This is because there is a fast algorithm to
check the \emph{acyclicity constraint}.
Assuming that the given solution is feasible with respect to both constraints,
it is sufficient to check whether $C_{out}(v,i)=0$ in the case that we want to
move $v$ to $V_{i+1}$ and $C_{in}(v,i)=0$ in the case that we want to move $v$
to $V_{i-1}$.
The gain of a node movement depends on the block and is calculated as:

\vspace*{-.5cm}
\begin{equation*}
\begin{cases}
C_{in}(v,i-1)-C_{out}(v,i) & \text{when moving } v \text{ to } V_{i-1}\\
C_{out}(v,i+1)-C_{in}(v,i) & \text{when moving } v \text{ to } V_{i+1}.
\end{cases}\label{eq:gain}
\end{equation*}

A block is eligible if the move does not create a cycle and does not overload
the block. In addition, the gain has to be positive or zero but the balance of
the partitioning is improved.
If there is such a block, we move $v$ to it. In the case that both blocks are
eligible for the move and have the same gain, the heuristic selects one
uniformly at random.

We repeat the process for all nodes. Our heuristic stops if there is no node
with positive gain or balance cannot be improved.
Hence, our heuristic terminates when a local minimum is found with respect to the
local search neighborhood defined above. Note that even though the edge cut is
not \emph{strictly} monotonically decreasing, the combination of edge cut and
difference in block weight is.
In one pass, the heuristic considers the in- and outgoing edges of all nodes.
Thus, each edge is considered exactly twice to calculate the gain for all nodes
and the complexity of the heuristic is $O(m)$ per round.

\paragraph*{Advanced Moves (AM)}
This algorithm increases the local search neighborhood of the Simple Moves
algorithm by considering more target blocks for a move. For the node $v\in V_i$
under consideration, all incoming edges are checked to find the node $u\in V_A$
where $A$ is maximal. Also all outgoing edges are checked to find the node $w\in
V_B$ where $B$ is minimal. Since the original partition was obtained from a
topological ordering, $A\leq i\leq B$ must hold, otherwise there would be back
edges in the ordering and thus it would not be a topological ordering. If
$A = i = B$, then the node $v$ has in- and outgoing edges in its own block and
cannot be moved. If $A < i$, then the node can be moved to blocks preceding
$V_i$ up to and including $V_A$ in the topological ordering without creating a
cycle. This is because all incoming edges of the node will either be internal to
block $V_A$ or are forward edges starting from blocks preceding $V_A$.
Therefore it is still a topological ordering. However, when the node is moved to
a block preceding $V_A$, the edge starting in this block becomes a back edge
and the ordering is not a topological ordering anymore.
Similar, if $i < B$, the node can be moved to blocks succeeding $V_i$ up to and
including $V_B$.
Thus moving the node to $V_j$ with $j\in \{A,\ldots,B\}\setminus \{i\}$ will
preserve the topological ordering of blocks. This is a sufficient condition to
ensure the acyclicity constraint and is not computationally expensive to check.
However, since it is not a necessary condition, it might prevent the heuristic
from testing some possible moves. The Global Moves heuristic does not have this
limitation, but has a higher~computational~complexity.

The gain of the moves to all allowed $V_j$ is computed with the cost functions
described in the previous section as
$C_{in}(v, j)-C_{out}(v,i)+C_{out}(v,j)-C_{in}(v,i)$.
In each iteration, the move with the largest gain such that the constraints are
maintained is selected. Tie-breaking and gains of zero are handled in the same
way as in Simple Moves.

This heuristic considers each edge exactly twice in order to calculate the gain
when moving the node to any other block.
Afterwards, a block yielding maximal gain is selected, which can be done in
time proportional to the degree of a node.
Thus, the complexity of this heuristic is $O(m)$.

\paragraph*{Global Moves (GM)}
With this algorithm, we increase the local search neighborhood even further by
considering all other blocks.
Starting from the initial partition, the algorithm computes the adjacency lists
of the quotient graph. Throughout the algorithm the quotient graph is kept
up-to-date.
When moving a node we update the adjacency information of the quotient graph and
record whether a new edge has been created. If this is the case we check the
quotient graph for acyclicity by using Kahn's algorithm and undo the last
movement if it created a cycle.
\csch{add a sentence with intuition}
\csch{missing analysis of all algorithms, faster check by keeping ordering of quotient graph up to date?, FM-based stuff?}

The calculation of the gain values can be done in $O(m)$ as for the other
heuristics. For a node, the heuristic needs to check the acyclicity constraint
for all considered moves/blocks in the worst case.
Since Kahn's algorithm checks the quotient graph for acyclicity, the total
complexity of this heuristic is $O(m(m_\mathcal{Q}+k))$ where $m_\mathcal{Q}$
is the number of edges in the quotient graph. If the quotient graph is sparse,
\ie $m_Q$ is $O(k)$, we get a complexity~of~$O(km)$.

\paragraph*{FM Moves (FM)}
This heuristic combines the quick check for acyclicity of the Advanced Moves
heuristic with an adapted Fiduccia-Mattheyses algorithm \cite{fiduccia1982lth}
which gives the heuristic the ability to climb out of a local minimum.
The initial partitioning is improved by exchanging nodes between a pair of
blocks even if the gain is negative. The partition with the best objective
that was seen during the pass will be returned.
A pass starts with two blocks $A$ and $B$, where $A$ precedes $B$ in the
topological ordering of blocks. The algorithm will then calculate the gain for
moving \emph{enabled} boundary nodes to the other block.
Using the same criterion to guarantee acyclicity as the Advanced Moves heuristic, we
say that a boundary node is enabled if it is in $A$ and does not have outgoing
edges to nodes that precede $B$ or it is in $B$ and does not have incoming
edges from nodes that follow $A$.
The candidate moves, consisting of a gain and a node identifier, are inserted
into a priority queue.
The queue is a binary heap where the total order on the elements
is implemented by comparing the gain of the moves and, if the gain is the same,
a random number that is generating upon insertion.

In a loop that runs until the priority queue is depleted, the first move is
extracted from the queue. If the selected move would overload the target block
or is not enabled because it was disabled in a previous loop iteration, the
heuristic continues with the next iteration.
Otherwise, the move will be committed even if the gain is negative. The node is
then locked, \ie it cannot be moved again during this pass. This prevents
thrashing and guarantees the termination of the algorithm.
Unlike the Fiduccia-Mattheyses algorithm, a move in this scenario does not
change the gain, it disables and enables other moves. For example, if a node $w$
is moved from $A$~to~$B$, the heuristic will disable all nodes $v$ in block $B$
with $(w,v) \in E$ since they do not fulfill the condition for acyclicity
anymore and moving any of them to $A$ would introduce a back edge in the
topological ordering of blocks. This does not necessarily mean that the quotient
graph would become cyclic, however, assuring this would require a more expensive
check like Kahn's algorithm.
Note that the gain of the moves does not need to be re-calculated since $w$ was
locked and thus all nodes $v$ will not be enabled again in this pass.
On the other hand, moving $w$ enables nodes in $A$ if they are connected with
an outgoing edge to $w$ and if after the move they do not have other outgoing
edges to blocks preceding $B$.
The heuristic will calculate the gain for these nodes, enable and insert them
into the priority queue. A move from $B$ to $A$ will enable and disable moves
correspondingly.
The loop will continue to move nodes between the blocks until the priority queue
is depleted, which occurs when all nodes are either disabled or locked.
Since the number of loop iterations is hard to predict due to the reinsertion of
moves, it is limited to $2n/k$ which did not have a measurable impact on the
quality of obtained partitionings.
The best objective that was achieved in the pass is recorded. In the final step,
the last moves are undone if required to reach the corresponding partitioning.
This terminates the inner pass of the heuristic.

The outer pass of the heuristic will repeat the inner pass for randomly chosen
pairs of blocks. At least one of these blocks has to be ``active''. Initially, all blocks
are marked as ``active''. If and only if the inner pass results in movement of
nodes, the two blocks will be marked as active for the next iteration. The
heuristic stops if there are no more~active~blocks.

The overall time to compute gain values is $O(m)$.
We now analyse the running time for a pair of blocks. In the worst case, all
nodes of both blocks are enabled in the beginning and initializing the priority
queue with $2n/k$ nodes requires $O(\frac{n}{k})$ time.
Note that we cannot use a bucket priority queue, since the weights associated
with the edges can be more or less arbitrarily distributed.
Removing a node with the best gain from the queue takes $O(\log \frac{n}{k})$
time.
If a move is committed in an iteration, the heuristic needs to calculate the
gain of adjacent nodes. However, the heuristic will never calculate the gain
of a move twice during a pass.
Thus the total complexity of the inner pass is~$O(\frac{n}{k}\log
\frac{n}{k})$.
Note that the inner pass needs to be performed for all pairs of
blocks which yields overall time $O(m+m_\mathcal{Q}\frac{n}{k}\log
\frac{n}{k})$ per round of the algorithm, or $O(m+n\log \frac{n}{k})$ if the
quotient graph is sparse.
\section{Experimental Evaluation}
\label{evaluation}
\label{s:experiments}
In this section we evaluate the performance of our algorithms. We start
by presenting methodology and the systems we use for the evaluation. Then we evaluate the solution quality on small instances
by comparing with the optimal solutions and evaluate our algorithms on complex imaging filters. We finish with testing the scalability of our algorithms.

\subparagraph*{Methodology.}
We have implemented the algorithms described above using C++.
All programs have been compiled using g++ 4.8.0 and 32 bit index data types.
The system we use is equipped with two Intel Xeon X5670 Hexa-Core processors
(Westmere) running at a clock speed of 2.93 GHz. The machine has 128GB main
memory, 12MB L3-Cache and 6$\times$256KB L2-Cache.
All instances described in this section will be made available~on~request.
\subparagraph*{Comparison with Optimal Solutions.}
This section compares the results of our heuristics against the optimal solution
obtained by a non-polynomial time algorithm that performs an exhaustive search.
We create a set of random graphs that are close to instances from typical
\begin{wraptable}{r}{0.6\columnwidth}
\centering
\vspace*{-.5cm}
\begin{tabular}{cr|r|r|r|r}
$k$ & $\epsilon$ & SM & AM & GM & FM\tabularnewline
\hline
\hline
\multirow{4}{*}{2} & 20 \% & 3.41 \%  & 3.41 \%  & 3.41 \%  & 0.26 \% \tabularnewline
\cline{2-6}
                   & 30 \% & 11.94 \% & 11.91 \% & 11.90 \% & 0.33 \% \tabularnewline
\cline{2-6}
                   & 40 \% & 14.71 \% & 14.78 \% & 14.58 \% & 1.29 \% \tabularnewline
\cline{2-6}
                   & 50 \% & 23.32 \% & 23.36 \% & 23.04 \% & 1.21 \% \tabularnewline
\hline
\multirow{4}{*}{4} & 20 \% & 1.89 \%  & 1.27 \%  & 1.33 \%  & 0.74 \% \tabularnewline
\cline{2-6}
                   & 30 \% & 4.03 \%  & 3.22 \%  & 3.25 \%  & 0.67 \% \tabularnewline
\cline{2-6}
                   & 40 \% & 5.09 \%  & 3.65 \%  & 3.69 \%  & 0.44 \% \tabularnewline
\cline{2-6}
                   & 50 \% & 6.50 \%  & 4.04 \%  & 4.19 \%  & 0.31 \% \tabularnewline
\end{tabular}
\vspace*{.25cm}
\caption{Each cell shows the averaged result of the heuristic for the current combination of block count $k$ and imbalance $\epsilon$. The value is the increase in cost compared to the optimal solution.\label{tab:random_graphs}}
\vspace*{-.5cm}
\end{wraptable}
applications. Our generation algorithm works by consecutively adding new graph
levels with a random number of nodes. Each of the new nodes is connected to a
random number of nodes in previous levels.
Because the application domain of this work is imaging, we use a small number of
input and output nodes (between one and three) which is typically the case for
imaging and vision kernels (compare library of OpenVX vision functions
\cite{openvx_vision_functions}).
Since the weight of nodes is representing the program size, we select a random
value between the size of the smallest and the largest kernel in an implementation
of the Local Laplacian filter for our target platform.
The weight of edges is uniformly chosen between 1 and 100 to account for
different sizes for intermediate buffers between the functions.

Because the following parameters have a major impact on the structure of
the graph, we use two different values for each and generate 25 graphs
for each of the eight resulting parameter combinations:
\begin{itemize}
\item The maximum size of a graph level is either set to a high value ($\sqrt{n}$) which results in a graph that can in extreme cases have $\sqrt{n}$ levels with about $\sqrt{n}$ nodes each, meaning that there is a high amount of data parallelism, and low values ($\sqrt[4]{n}$) such that the graph resembles more a long chain of nodes and thus represents the classical imaging pipeline with low data parallelism on kernel level.
\item The maximum number of edges is either set to the lowest number that ensures that inner nodes have at least one incoming and one outgoing edge and that the graph is connected or to $\sqrt{n}$ per node such that the number of edges scales with the problem size. This reflects applications with few and many data dependencies between functions.
\item The maximum distance in terms of node indices, over which new nodes are connected to preceding nodes in the graph, is either set to a low value that results in a graph where nodes only have incoming edges from the closest preceding levels or it is set to $n$ which means that there is no restriction on where edges can start. The first case models application where data is short-lived and only needed for the next step in a pipeline while the second case represents scenarios with a long data lifetime.
\end{itemize}

These $200$ different problems instances were generated for problem sizes in
the range of $n \in [10,\ldots,20]$ nodes each.
Table~\ref{tab:random_graphs} shows the averaged approximation factor of the
four heuristics when using a time budget of 10 milliseconds. Running times of
the exhaustive search algorithm vary between a couple of seconds for the small
instances and a couple of days for the large instances.
The results show a good approximation of the optimal result that improves
further with increasing size of the search by the different heuristics.
The quality of SM, AM and GM degrades with large $\epsilon$ since they can get
trapped in a local minimum, FM moves on the other hand shows a close and
consistent approximation.

\vspace*{-.5cm}
\subparagraph{Local Laplacian Filter.}

The Local Laplacian filter is an edge-aware image processing filter.
A detailed description of the algorithm and theoretical background
is given in~\cite{paris2011local}.
\begin{wraptable}{r}{0.6\columnwidth}
\centering
\begin{tabular}{l|r|r|r|r|r|}
 & man. & SM & AM & GM & FM\tabularnewline
\hline
\hline
number programs & \numprint{20} & \numprint{22} & \numprint{16} & \numprint{16} & \numprint{17}\tabularnewline
\hline
number gangs & \numprint{5} & \numprint{7} & \numprint{4} & \numprint{4} & \numprint{6}\tabularnewline
\hline
1-level edge cut & \numprint{11.4} & \numprint{10.9} & \numprint{8.8} & \numprint{8.9} & \numprint{10.4}\tabularnewline
\hline
2-level edge cut & \numprint{8.9} & \numprint{6.2} & \numprint{5.0} & \numprint{4.7} & \numprint{6.1}\tabularnewline
\hline
rel.\ execution time & \numprint{1.00} & \numprint{1.09} & \numprint{0.89} & \numprint{1.04} & \numprint{1.31}\tabularnewline
\end{tabular}
\vspace*{.25cm}
\caption{Table comparing the results of the manual implementation with the solution found by the heuristic\label{tab:laplace}}
\vspace*{-.5cm}
\end{wraptable}
The algorithm uses concepts of \emph{Gaussian pyramids} and \emph{Laplacian pyramids} as well as a point-wise remapping
function in order to enhance image details without creating artifacts.
We model the data flow of the filter as a DAG where nodes represent simple
function primitives, \eg{}upsampling, downsampling and gaussian filtering for
both image dimensions for each level of the pyramid generation.
If there is a direct data dependency between two nodes, they are connected by an
edge with a weight of the number of pixels of the corresponding buffer.
The node weight is set to the program memory used by the primitive.
The DAG has $72$ nodes and $93$ edges in total in our configuration.
In an existing implementation, the primitives were grouped in a functional way
(\eg{}pyramid generation) by the developer into programs and then assigned to
a total of five scheduling gangs.
To evaluate the heuristics, we use a first pass with $L_{\max}$ set to the size
of the program memory to find a good composition of function primitives into
programs.
The resulting quotient graph is then used in a second pass where $L_{\max}$ is
set to the total number of PEs in order to find scheduling gangs that minimize
external memory transfers. In this second step the acyclicity constraint is
crucially important.
In both passes, empty blocks were explicitly permitted to allow the heuristics
to reduce the number of gangs.
The time budget given to each heuristic is one minute.
We also found that due to (desired) compiler optimizations, the final program
memory size of a program can be smaller than the sum of its primitives. Since the
entire process of partitioning, code generation and compilation is automated, we
took advantage of this by slowly increasing $\epsilon$ until the programs became
too large. The results are shown in Table~\ref{tab:laplace}. The 1-level edge
cut shows the amount of communication between programs, the 2-level edge cut
between gangs, both in megapixels. The cycle count was obtained with a cycle-true
compiled simulator of the hardware platform. Since gangs are executed one after
another, the longest execution time of a program in a gang limits the throughput.
The table shows the summed execution of each limiting program relative to the
manual implementation, not taking bandwidth limitations into account in order to
see when a schedule will be compute-limited.
All heuristics improve the edge cut by at least $30 \%$, thus the schedule will
be superior on all platforms where the manual implementation is
bandwidth-limited. In addition, by reducing edge cut, the AM and GM heuristics
find partitionings that require fewer gangs. For AM, this improves execution time,
so the schedule will be superior even if bandwidth is not the limiting factor.
For GM, the heuristic makes an additional choice that reduces edge cut even
further, but does not balance compute-intensive programs well and thus schedule
time is not improved. This is not addressed in our graph partitioning heuristic
and should be considered in a scheduling algorithm.
\begin{figure}[t]
\centering
\begin{subfigure}{.49\textwidth}
  \begin{tikzpicture}
\pgfplotsset{
    width  = \columnwidth,
    height = \columnwidth,
    legend pos = north west,
}
\begin{axis}[
    ylabel             = {time [s]},
    xlabel             = {nodes},
]
\addplot coordinates {
	(32759,     57.962   )
	(65532,     127.100  )
	(131067,    261.790  )
	(262141,    578.702  )
	(524280,    1500.500 )
	(1048572,   2106.960 )
	(2097142,   4514.620 )
	(4194299,   10195.800)
};
\addplot coordinates {
	(32759,     62.110   )
	(65532,     127.227  )
	(131067,    260.960  )
	(262141,    575.048  )
	(524280,    1582.250 )
	(1048572,   2217.540 )
	(2097142,   4569.460 )
	(4194299,   10471.100)
};
\addplot coordinates {
	(32759,     68.621   )
	(65532,     139.009  )
	(131067,    299.896  )
	(262141,    620.418  )
	(524280,    1603.560 )
	(1048572,   2328.400 )
	(2097142,   5111.460 )
	(4194299,   10938.500)
};
\addplot coordinates {
	(32759,     57.904  )
	(65532,     109.336 )
	(131067,    242.443 )
	(262141,    532.275 )
	(524280,    1130.080)
	(1048572,   1587.100)
	(2097142,   3300.750)
	(4194299,   6518.600)
};
\legend{SM, AM, GM, FM}
\end{axis}
\end{tikzpicture}
  \vspace*{.22cm}
  \caption{\centering execution time averaged over 100 passes}
\end{subfigure}
\begin{subfigure}{.49\textwidth}
  \vspace*{.22cm}
  \begin{tikzpicture}
\pgfplotsset{
    width  = \columnwidth,
    height = \columnwidth,
    legend pos = north east,
}
\pgfplotsset{
        /pgfplots/ybar legend/.style={
        /pgfplots/legend image code/.code={
        \draw[##1,/tikz/.cd,bar width=3pt,yshift=-0.2em,bar shift=0pt]
                plot coordinates {(0cm,0.8em)};},},
}
\begin{axis}[
    bar width         = {4pt},
    ybar              = {2*\pgflinewidth},
    ylabel            = {relative edge cut reduction},
    ymin              = {0},
    ymax              = {0.5},
    xlabel            = {exponent X},
    xtick             = {data},
    xtick pos         = {left},
    symbolic x coords = {15,16,17,18,19,20,21,22},
    enlarge x limits  = {0.1},
]
\addplot[ybar, fill=red] coordinates {
	(15, 0.221)
	(16, 0.223)
	(17, 0.257)
	(18, 0.243)
	(19, 0.231)
	(20, 0.232)
	(21, 0.233)
	(22, 0.222)
};
\addplot[ybar, fill=brown] coordinates {
	(15, 0.227)
	(16, 0.221)
	(17, 0.250)
	(18, 0.222)
	(19, 0.242)
	(20, 0.227)
	(21, 0.226)
	(22, 0.223)
};
\addplot[ybar, fill=black] coordinates {
	(15, 0.367)
	(16, 0.341)
	(17, 0.350)
	(18, 0.329)
	(19, 0.327)
	(20, 0.316)
	(21, 0.319)
	(22, 0.307)
};
\legend{AM, GM, FM}
\end{axis}
\end{tikzpicture}
  \caption{\centering edge cut relative to Simple Moves}
\end{subfigure}
\caption{Graph showing the execution time of each heuristic and the relative edge cut on directed random geometric graphs rggX. }
\label{tab:large_graphs}
\end{figure}

\subparagraph*{Random Geometric Graphs.}
We now look at the scalability of our heuristics. We do this
on \emph{random geometric graphs} where nodes represent random points in the unit square and edges
connect nodes whose Euclidean distance is below $0.55 \sqrt{ \ln n / n }$.
This threshold was chosen in order to ensure that the graph is almost connected.
These graphs were taken from~\cite{benchmarksfornetworksanalysis} and were initially undirected.
We convert them into DAGs by directing edges from smaller to larger node ids.
The graph rgg$X$ has $2^X$ nodes. We vary $X \in [15,\ldots,22]$.
The allowed imbalance was set to $3 \%$ since this is one of the values used in \cite{walshaw2000mpm}.
Figure~\ref{tab:large_graphs} shows the averaged time
required for 100 passes of each heuristic and the relative improvement in edge
cut that was found for $k=8$ by the more advanced heuristics in comparison to
the Simple Moves heuristic.
The figure shows a linear growth of our heuristics respective to number of nodes.
The worst case complexity of FM moves was shown to be superlinear since it had
to be assumed that all nodes are boundary nodes, which is not the case here. We conclude that our algorithms scale well to large problems.

\section{Conclusion}
\label{conclusion}
In this work we designed, implemented and evaluated new heuristics that
partition application graphs under constraints that are important for
multiprocessor scheduling. It was shown that the constrained problem is
NP-complete and that the heuristics yield good approximations of the optimal
solution for small problem instances and have a linear growth for larger
instances. In a simulation we could show the positive impact on communication
volume of a real application for all heuristics and in one case even a
reduction of execution time when bandwidth is not the limiting factor due to
a reduction in number of scheduling gangs.
The running time of the heuristics w.r.t to problem size is sufficient for this
application domain, especially since the algorithms only need to run at
compile-time.
Also the communication volume was reduced by an extent that suggests that for
future work it will be more rewarding to introduce additional objectives that
help to improve gang execution time by better balancing compute-intensive
kernels.

\bibliographystyle{plain}
\bibliography{bibliography,phdthesiscs}

\begin{thebibliography}{10}

\bibitem{Karypis06}
A.~Abou-Rjeili and G.~Karypis.
\newblock {Multilevel Algorithms for Partitioning Power-Law Graphs}.
\newblock In {\em Proc. of 20th IPDPS}, 2006.

\bibitem{andreev2006balanced}
K.~Andreev and H.~R{\"a}cke.
\newblock {Balanced Graph Partitioning}.
\newblock {\em Theory of Computing Systems}, 39(6):929--939, 2006.

\bibitem{benchmarksfornetworksanalysis}
D.~A. Bader, H.~Meyerhenke, P.~Sanders, C.~Schulz, A.~Kappes, and D.~Wagner.
\newblock {Benchmarking for Graph Clustering and Partitioning}.
\newblock In {\em Encyclopedia of Social Network Analysis and Mining}, to
  appear.

\bibitem{GPOverviewBook}
C.~Bichot and P.~Siarry, editors.
\newblock {\em Graph Partitioning}.
\newblock Wiley, 2011.

\bibitem{SPPGPOverviewPaper}
A.~Bulu\c{c}, H.~Meyerhenke, I.~Safro, P.~Sanders, and C.~Schulz.
\newblock {Recent Advances in Graph Partitioning}.
\newblock In {\em Algorithm Engineering -- Selected Topics, \emph{to app.},
  ArXiv:1311.3144}, 2014.

\bibitem{ptscotch}
C.~Chevalier and F.~Pellegrini.
\newblock {PT-Scotch}.
\newblock {\em Parallel Computing}, 34(6-8):318--331, 2008.

\bibitem{feitelson1992gang}
D.~G Feitelson and L.~Rudolph.
\newblock Gang scheduling performance benefits for fine-grain synchronization.
\newblock {\em Journal of Parallel and distributed Computing}, 16(4):306--318,
  1992.

\bibitem{fiduccia1982lth}
C.~M. Fiduccia and R.~M. Mattheyses.
\newblock {A Linear-Time Heuristic for Improving Network Partitions}.
\newblock In {\em Proceedings of the 19th Conference on Design Automation},
  pages 175--181, 1982.

\bibitem{gary1979computers}
M.~R. Gary and D.~S. Johnson.
\newblock Computers and intractability: A guide to the theory of
  np-completeness, 1979.

\bibitem{goossens2016optimal}
J.~Goossens and P.~Richard.
\newblock Optimal scheduling of periodic gang tasks.
\newblock {\em Leibniz transactions on embedded systems}, 3(1):04--1, 2016.

\bibitem{openvx}
Khronos Group.
\newblock {The OpenVX API}.
\newblock \url{https://www.khronos.org/openvx/}.

\bibitem{openvx_vision_functions}
Khronos Group.
\newblock {The OpenVX Specification: Vision Functions}.
\newblock
  \url{https://www.khronos.org/registry/OpenVX/specs/1.0/html/da/db6/group__group__vision__functions.html}.

\bibitem{kahn1962topological}
A.~B. Kahn.
\newblock Topological sorting of large networks.
\newblock {\em Communications of the ACM}, 5(11):558--562, 1962.

\bibitem{karypis1998fast}
G.~Karypis and V.~Kumar.
\newblock {A Fast and High Quality Multilevel Scheme for Partitioning Irregular
  Graphs}.
\newblock {\em SIAM Journal on Scientific Computing}, 20(1):359--392, 1998.

\bibitem{lee1987synchronous}
E.~A. Lee and D.~G. Messerschmitt.
\newblock Synchronous data flow.
\newblock {\em Proceedings of the IEEE}, 75(9):1235--1245, 1987.

\bibitem{meyerhenke2006accelerating}
H.~Meyerhenke, B.~Monien, and S.~Schamberger.
\newblock {Accelerating Shape Optimizing Load Balancing for Parallel FEM
  Simulations by Algebraic Multigrid}.
\newblock In {\em Proc. of 20th IPDPS}, 2006.

\bibitem{panda2001data}
P.~R. Panda, F.~Catthoor, N.~D. Dutt, K.~Danckaert, E.~Brockmeyer, C.~Kulkarni,
  A.~Vandercappelle, and P.~G. Kjeldsberg.
\newblock Data and memory optimization techniques for embedded systems.
\newblock {\em ACM Transactions on Design Automation of Electronic Systems
  (TODAES)}, 6(2):149--206, 2001.

\bibitem{paris2011local}
S.~Paris, S.~W. Hasinoff, and J.~Kautz.
\newblock Local laplacian filters: edge-aware image processing with a laplacian
  pyramid.
\newblock {\em ACM Trans. Graph.}, 30(4):68, 2011.

\bibitem{Scotch}
F.~Pellegrini.
\newblock {Scotch Home Page}.
\newblock {\url{http://www. labri.fr/pelegrin/scotch}}.

\bibitem{picard1980structure}
J.~C. Picard and M.~Queyranne.
\newblock {On the Structure of All Minimum Cuts in a Network and Applications}.
\newblock {\em Mathematical Programming Studies}, 13:8--16, 1980.

\bibitem{rainey2014addressing}
E.~Rainey, J.~Villarreal, G.~Dedeoglu, K.~Pulli, T.~Lepley, and F.~Brill.
\newblock Addressing system-level optimization with openvx graphs.
\newblock In {\em Proceedings of the IEEE Conference on Computer Vision and
  Pattern Recognition Workshops}, pages 644--649, 2014.

\bibitem{kaffpa}
P.~Sanders and C.~Schulz.
\newblock {Engineering Multilevel Graph Partitioning Algorithms}.
\newblock In {\em Proc. of the 19th European Symp. on Algorithms}, volume 6942
  of {\em LNCS}, pages 469--480. Springer, 2011.

\bibitem{schloegel2000gph}
K.~Schloegel, G.~Karypis, and V.~Kumar.
\newblock {Graph Partitioning for High Performance Scientific Simulations}.
\newblock In {\em The Sourcebook of Parallel Computing}, pages 491--541, 2003.

\bibitem{Sou35}
R.~V. Southwell.
\newblock {Stress-Calculation in Frameworks by the Method of {``Systematic
  Relaxation of Constraints''}}.
\newblock {\em Proc. of the Royal Society of London}, 151(872):56--95, 1935.

\bibitem{stavrinides2016scheduling}
G.~L. Stavrinides and H.~D. Karatza.
\newblock Scheduling different types of applications in a saas cloud.
\newblock In {\em Proceedings of the 6th International Symposium on Business
  Modeling and Software Design (BMSD’16)}, pages 144--151, 2016.

\bibitem{walshaw2000mpm}
C.~Walshaw and M.~Cross.
\newblock {Mesh Partitioning: A Multilevel Balancing and Refinement Algorithm}.
\newblock {\em SIAM Journal on Scientific Computing}, 22(1):63--80, 2000.

\bibitem{Walshaw07}
C.~Walshaw and M.~Cross.
\newblock {JOSTLE: Parallel Multilevel Graph-Partitioning Software -- An
  Overview}.
\newblock In {\em {Mesh Partitioning Techniques and Domain Decomposition
  Techniques}}, pages 27--58. 2007.

\end{thebibliography}

\begin{appendix}
\end{appendix}

\end{document}